\documentclass[conference]{IEEEtran}
\IEEEoverridecommandlockouts
\usepackage{cite}
\usepackage{amsmath,amssymb,amsfonts}
\usepackage{algorithmic}
\usepackage{graphicx}
\usepackage{textcomp}
\usepackage[english]{babel}
\usepackage{dsfont}
\usepackage{xcolor}
\usepackage{autobreak}
\usepackage{enumitem}
\usepackage{amsthm}
\usepackage{siunitx}  
\usepackage{bm}
\usepackage{enumitem}

\usepackage{subfig}
\usepackage{xcolor}
\def\BibTeX{{\rm B\kern-.05em{\sc i\kern-.025em b}\kern-.08em
    T\kern-.1667em\lower.7ex\hbox{E}\kern-.125emX}}

\allowdisplaybreaks[3]
\makeatletter
\newcommand{\vast}{\bBigg@{5}}
\newcommand{\Vast}{\bBigg@{5.5}}

\newtheorem{proposition}{Proposition}
\makeatother
\begin{document}
\title{Steady-State Optimal Frequency Control for Lossy Power Grids with Distributed Communication\\
\thanks{This work was funded by the Deutsche Forschungsgemeinschaft (DFG, German Research Foundation)---project number 360464149.	
}
}
\author{\IEEEauthorblockN{Lukas K\"olsch, Kirtan Bhatt, Stefan Krebs, and S\"oren Hohmann}
	\IEEEauthorblockA{\textit{Institute of Control Systems, Karlsruhe Institute of Technology (KIT)}, Karlsruhe, Germany \\
		lukas.koelsch@kit.edu, kirtan.bhatt@student.kit.edu, stefan.krebs@kit.edu, soeren.hohmann@kit.edu}
}
\maketitle
\begin{abstract}
We present a distributed and price-based control approach for frequency regulation in power grids with nonzero line conductances. Both grid and controller are modeled as a port-Hamiltonian system, where the grid model consists of differential as well as algebraic equations.
Simulations show that the resulting controller asymptotically stabilizes the frequency while maintaining minimum overall generation costs in steady state and being robust in terms of clock drifts and uncontrollable loads. Moreover, it is shown that active power sharing can be achieved by an appropriate choice of the cost function.
\end{abstract}
\begin{IEEEkeywords}
nonlinear control, frequency regulation, distributed control, active power sharing, port-Hamiltonian systems, steady-state optimal control
\end{IEEEkeywords}
\section{Introduction}
\subsection{State of Research}
Frequency control and thus regulation of the balance between generation and consumption in the electrical grid has so far been the task of the transmission system operator.
With the worldwide trend towards more renewable energy generation and a displacement of large conventional power plants, there is an increasing number of small-scale generation, 
which raises the need for a replacement of the centralized control strategy by a distributed one.
Distributing frequency regulation to several agents allows to divide a complex task into several smaller tasks which are solved in parallel by individual agents. In addition, distributed control results in an increased robustness of the overall system with respect to preventing single points of failures and attacks from outside \cite{Patnaik.2018}.

A class of distributed control concepts that has recently been very popular in terms of frequency regulation and balancing is real-time dynamic pricing (see \cite{Doerfler2019} for a detailed survey on current research directions).
Dynamic pricing is particularly advantageous in large scale networks as it enables implicit communication of momentary imbalances via a price signal, resulting in a dynamic feedback minimization \cite{Jokic.2007,Colombino2018,Lawrence2018,Menta2018} of the overall costs. The price signal represents an aggregated information about current imbalances between generations and consumptions. Thus the control can be distributed based on neighbor-to-neighbor communication as well as local measurements and local control. The actual status of the network does not have to be completely known to the individual agents. 

Based on the formulation of a specific overall cost function $C$, the distributed minimization of this cost function allows certain goals to be achieved at the same time, such as equal marginal prices for each agent or active power sharing.


In previous publications dealing with real-time dynamic pricing for frequency control of power grids, 
the controllers were always designed under the assumption that line conductances are all zero \cite{Mallada.2017,Zhao2018,Köhler.2017, MohsenianRad.2010,Trip.2016,Stegink.}. However, this is an inadmissible assumption especially for distribution grids, see. e.g. \cite{Marano2012}. In fact it can be shown by simulation that applying these controllers to the lossy AC power flow model always leads to a synchronous frequency $\overline \omega$ which deviates from the nominal frequency $\omega^{n}$. As a consequence, a practical implementation of all of these controllers would still require some kind of additional frequency restoration or secondary frequency control by a TSO. 
Thus a further development of distributed control algorithms is needed to provide a benefit compared to the classical hierarchy of primary, secondary, and tertiary frequency control in the lossy case.

\subsection{Main Contributions}
To overcome the steady-state frequency deviation mentioned above, we propose an extended price-based and distributed controller, which takes into account the local nonzero conductances of neighboring lines and leads to zero deviation from nominal frequency $\omega^n$. To facilitate transient stability analysis, we represent both plant and controller as a port-Hamiltonian system, which results in a closed-loop system that is again port-Hamiltonian.
Stability is then derived from a shifted passivity property with respect to the post-fault equilibrium.
 
We compare our approach with the one presented in \cite{Stegink.} using a simulation example with both controllable and uncontrollable infeeds and loads. We show that stability of the closed-loop system is given under various communication topologies. Furthermore, we show the transient behavior of the closed loop under step load changes, incorrect measurements and communication failures.

The remainder of this paper is organized as follows. 
In Section II, we give some notational remarks and introduce a port-Hamiltonian formulation for a power grid with generator and load nodes which are coupled via lossy AC power lines.
In Section III, we deploy a distributed price-based controller which aims at minimizing overall generation costs while keeping the steady-state synchronous frequency to the nominal frequency.
In Section IV, we assess our approach on a test network with variable load scenarios as well as different communication topologies. 
Finally, in Section V, we sum up our contributions and discuss trends for future work.
\section{Power Grid Model} \label{sec-PGM}

\subsection{Notational Preliminaries}
Positive semidefiniteness of a matrix is denoted by $\succeq 0$, whereas element-wise nonnegativity of a vector or matrix is denoted by $\geq 0$. Vector $\bm a = \mathrm{col}_i\{a_i\}=\mathrm{col}\{a_1,a_2,\ldots\}$ is a column vector of elements $a_i$, $i=1,2,\ldots$ and matrix $\bm A=\mathrm{diag}_i\{a_i\}=\mathrm{diag}\{a_1,a_2,\ldots\}$ is a (block-)diagonal matrix of elements $a_i$, $i =1,2,\ldots$. The $(n \times n)$-identity matrix and $(n \times n)$-zero matrix are denoted by $\bm I_n$ and $\bm 0_n$, respectively. For all other vectors and matrices, the dimensions are either explicitly specified or they result from the context.

The power grid is modeled by a directed graph $\mathcal G_p=(\mathcal V, \mathcal E_p)$ with $\mathcal V = \mathcal V_g \cup \mathcal V_\ell$ being the set of $n_g=|\mathcal V_g|$ generator nodes and $n_\ell=|\mathcal V_\ell|$ load nodes, respectively. The physical interconnection is represented by the incidence matrix $\bm D_p \in \mathds R^{n\times m_p}$ with $n=n_g+n_\ell$ and $m_p = |\mathcal E_p|$. Incidence matrix $\bm D_p$ can be subdivided as follows
\begin{align}
\bm D_p = \begin{bmatrix} \bm D_{pg} \\ \bm D_{p \ell} \end{bmatrix},
\end{align}
where submatrices $\bm D_{p g}$ and $\bm D_{p \ell}$ correspond to the generator and load nodes, respectively.
 
We note $j \in \mathcal N_i$ if node $j$ is a \emph{neighbor} of node $i$, i.e. $j$ is adjacent to $i$ in the undirected graph.

A list of all parameters and variables of the power grid is given in Table \ref{tab1}.
\subsection{Modeling Assumptions} \label{sec-assumptions}
Following the lines of \cite{Stegink.2017} and \cite{Stegink.}, we make the following modeling assumptions for power grid model and controller:
\begin{enumerate}
	\item The grid is operating around the nominal frequency $\omega^n=2\pi \cdot \SI{50}{\hertz}$.
	\item The grid is a balanced three-phased system and the lines are represented by its one-phase $\pi$-equivalent circuits.
	\item Subtransient dynamics of the synchronous generators are neglected.
	\item Delays in communication as well as sensor and actor delays of the controllers are neglected.
\end{enumerate}
However, we make the following additional (relaxed) assumptions:
\begin{enumerate}[resume]
	\item Power lines are lossy, i.e. have nonzero conductances.
	\item Loads do not have to be constant.
	\item Excitation voltages of the generators do not have to be constant.
\end{enumerate}
\begin{table}[!t]
					\renewcommand{\arraystretch}{1.3}
	\caption{List of Parameters and State Variables of Power System}
	\begin{center}
		\begin{tabular}{|l||l|}
			\hline
			$A_i$ & positive generator and load damping constant \\
			$B_{ij}$ & negative of susceptance of line $(i,j)$ \\
			$\bm D_p$ & incidence matrix of power grid\\
			$G_{ij}$ & negative of conductance of line $(i,j)$\\
			$L_i$ & deviation of angular momentum from the nominal value $M_i\omega^n$ \\
			$M_i$ & moment of inertia \\ 
			$p_i$ & sending-end active power flow \\
			$p_{g,i}$ & active power generation \\
			$p_{\ell,i}$ & active power demand \\
			$q_i$ & sending-end reactive power flow \\
			$q_{\ell,i}$ & reactive power demand \\
			$U_i$ & magnitude of transient internal voltage \\
			$U_{f,i}$ & magnitude of excitation voltage \\
			$X_{d,i}$ & d-axis synchronous reactance \\
			$X'_{d,i}$ & d-axis transient reactance \\
			$\theta_i$ & bus voltage phase angle \\
			$\vartheta_{ij}$ & bus voltage angle difference $\theta_i-\theta_j$\\
			$\Phi$ & overall transmission losses \\
			$\tau_{U,i}$ & open-circuit transient time constant of the synchronous machine \\
			$\omega_i$ & deviation of bus frequency from the nominal value $\omega^n$ \\
			\hline
		\end{tabular}
		\label{tab1}
	\end{center}
\end{table}
\subsection{Dynamic Model of Generator Buses}
As in \cite{Trip.2016}, each generator node is represented by a third-order synchronous generator model in local dq coordinates: 
\begin{align}
\dot \theta_i &= \omega_i, && i \in \mathcal V_g, \label{plant_first_eq}\\
\dot L_i &= -A_i \omega_i + p_{g,i}-p_{\ell,i}-p_i, && i \in \mathcal V_g, \\
\tau_{U,i}\dot U_i &= U_{f,i}-U_i - (X_{d,i}-X'_{d,i})U_i^{-1} \cdot q_i, && i \in \mathcal V_g.
\end{align}
Without loss of generality, yet for ease of notation, we assume that each generation $p_{g,i}$ is controllable and each load $p_{\ell,i}$ and $q_{\ell,i}$ is uncontrollable. 
Controllable loads can easily be added as negative generations. Thus $p_{g,i}$ are the control inputs while $p_{\ell,i}$ and $q_{\ell,i}$ act as disturbance inputs.
\subsection{Dynamic Model of Load Buses}
The load nodes are supposed to be uncontrollable and to have a frequency-dependent active power consumption, which is modeled by load damping coefficients $A_{i}\geq 0$ and which leads to the following set of differential-algebraic equations:
\begin{align}
\dot \theta_i &= \omega_i, && i \in \mathcal V_\ell, \\
0 &= -A_i \omega_i -p_{\ell,i}-p_i, && i \in \mathcal V_\ell, \\
0 &= -q_{\ell,i}-q_i, && i \in \mathcal V_\ell.
\end{align}
\subsection{Power Line Model}
The (sending-end) active and reactive power flows of node $i \in \mathcal V$ are given by the lossy AC power flow equations \cite{Machowski.2012}
\begin{align}
p_i &= \sum_{j \in \mathcal N_i} B_{ij}U_iU_j \sin(\theta_i - \theta_j) + G_{ii}U_i^2 && \nonumber\\
	&+ \sum_{j \in \mathcal N_i}G_{ij}U_iU_j\cos(\theta_i-\theta_j),&& i \in \mathcal V, \\
q_i &= -\sum_{j \in \mathcal N_i} B_{ij}U_iU_j \cos(\theta_i - \theta_j) + B_{ii}U_i^2 && \nonumber\\
&+ \sum_{j \in \mathcal N_i}G_{ij}U_iU_j\sin(\theta_i-\theta_j),&& i \in \mathcal V \label{plant_last_eq}
\end{align}
with $\bm Y = \bm G + \mathrm j \bm B$ being the admittance matrix. Note that by definition of the admittance matrix, $G_{ij}<0$ and $B_{ij}>0$ if nodes $i$ and $j$ are connected via a resistive-inductive line \cite{Machowski.2012}.
\subsection{Overall Model}
In order to get a port-Hamiltonian state space model of the \emph{plant}, i.e. the open-loop system, we chose the plant state vector as follows
\begin{align}
\bm x_p = \mathrm{col}\{\bm \vartheta, \bm L, \bm U_g, \bm \omega_{\ell,i}, \bm U_\ell\}
\end{align}
where for convenience we define the angle deviation $\bm \vartheta = \bm D_p^\top \bm \theta$ and the angular momenta $\bm L = \mathrm{col}_i\{L_i\}$ with $L_i = M_i\cdot \omega_i$, $i \in \mathcal V_g$.

To describe the energy stored in the power network, we choose the following positive-definite function as the plant Hamiltonian
\begin{align}
H_p(\bm x_p) &= \frac 12 \sum_{i \in \mathcal V_g}\left( M_i^{-1}L_i^2 + \frac{U_i^2}{X_{d,i}-X'_{d,i}}\right) \nonumber \\
&- \frac 12 \sum_{i \in \mathcal V_g \cup \mathcal V_\ell} B_{ii}U_i^2 - \sum_{(i,j) \in \mathcal E}B_{ij}U_iU_j \cos(\theta_i - \theta_j) \nonumber \\
&+\frac 12 \sum_{i \in \mathcal V_\ell} \omega_{\ell,i}^2
\end{align}
where the first row represents the (shifted) kinetic energy of the rotors and the magnetic energy of the generator circuits, the second row represents the magnetic energy of the transmission lines and the third row represents the local deviations of the loads from nominal frequency, i.e. the deviation from an ideal grid-supporting behavior.

Combining \eqref{plant_first_eq}--\eqref{plant_last_eq}, 
we get the following port-Hamiltonian formulation for the power grid:
\begin{align}
\begin{bmatrix}
\dot{\bm \vartheta} \\ \dot{\bm L} \\ \dot{\bm U}_g \\ \bm 0 \\ \bm 0 \end{bmatrix}
=&\vast[\underbrace{\begin{bmatrix}
\bm 0 & \bm D_{pg}^\top & \bm 0 & \bm D_{p\ell}^\top & \bm 0 \\
-\bm D_{pg} & \bm 0 & \bm 0 & \bm 0 & \bm 0 \\
\bm 0 & \bm 0 & \bm 0 & \bm 0 & \bm 0 \\
-\bm D_{p\ell} & \bm 0 & \bm 0 & \bm 0 & \bm 0 \\
\bm 0 & \bm 0 & \bm 0 & \bm 0 & \bm 0
\end{bmatrix}}_{\bm J_p} \nonumber \\
&-\underbrace{\begin{bmatrix}
	\bm 0 & \bm 0 & \bm 0 & \bm 0 & \bm 0 \\
	\bm 0 & \bm{A}_g & \bm 0 & \bm 0 & \bm 0 \\
	\bm 0 & \bm 0 & \bm{R}_g & \bm 0 & \bm 0 \\
	\bm 0 & \bm 0 & \bm 0 & \bm{A}_\ell & \bm 0 \\
	\bm 0 & \bm 0 & \bm 0 & \bm 0 & \widehat{\bm U}_\ell
	\end{bmatrix}}_{\bm R_p}\vast]
\nabla H_p \nonumber \\
&-
\underbrace{\begin{bmatrix}
\bm 0 \\ \bm{\varphi_g} \\ \bm{\varrho_g} \\ \bm{\varphi_\ell} \\ \bm{\varrho_\ell}
\end{bmatrix}}_{\bm r_p}
+
\begin{bmatrix}
\bm 0 & \bm 0 & \bm 0 & \bm 0 \\
\bm I & \bm 0 & \bm 0 & -\bm{\widehat I}_g \\
\bm 0 & \bm{\hat \tau}_U & \bm 0 & \bm 0 \\
\bm 0 & \bm 0 & \bm 0 & -\bm{\widehat I}_\ell \\
\bm 0 & \bm 0 & -\bm I & \bm 0
\end{bmatrix}
\begin{bmatrix}
\bm{p}_g \\ \bm{U}_f \\ \bm{q}_\ell \\ \bm{p}_\ell  
\end{bmatrix} 
\label{plant-phs}
\end{align}
with
\begin{align}
\bm A_g &= \mathrm{ diag}_i\{A_i\},&&i \in \mathcal V_g \\
\bm A_\ell &= \mathrm{ diag}_i\{A_i\}, && i \in \mathcal V_\ell \\
\bm R_g &= \mathrm{ diag}_i\left\{\frac{X_{di}-X_{di}'}{\tau_{U,i}}\right\}, && i \in \mathcal V_g \\
\widehat{\bm U}_\ell &= \mathrm{ diag}_i\{U_i\}, && i \in \mathcal V_\ell\\
\bm \varphi_g &= \mathrm{ col}_i\big\{G_{ii}U_i^2 + \sum_{j \in \mathcal N_i}G_{ij}U_iU_j\cos(\vartheta_{ij})\big\}, && i \in \mathcal V_g \\
\bm \varphi_\ell &= \mathrm{ col}_i\big\{G_{ii}U_i^2 + \sum_{j \in \mathcal N_i}G_{ij}U_iU_j\cos(\vartheta_{ij})\big\}, && i \in \mathcal V_\ell \\
\bm \varrho_g &= \mathrm{ col}_i\big\{R_{g,i}\sum_{j \in \mathcal N_i} G_{ij}U_iU_j\sin(\vartheta_{ij})\big\},&&i \in \mathcal V_g \\
\bm \varrho_\ell &= \mathrm{ col}_i\big\{\sum_{j \in \mathcal N_i} G_{ij}U_iU_j\sin(\vartheta_{ij})\big\},&&i \in \mathcal V_\ell \\
\bm{\hat \tau}_U &= \mathrm{ diag}_i\{1 / \tau_{U,i}\}, && i \in \mathcal V_g \\
\widehat{\bm I}_g &= \begin{bmatrix} \bm I_{n_g \times n_g} & \bm 0_{n_g \times n_\ell} \end{bmatrix},\\
\widehat{\bm I}_\ell &= \begin{bmatrix} \bm 0_{n_\ell \times n_g}   &\bm I_{n_\ell \times n_\ell} \end{bmatrix}. 
\end{align}
With $\bm J_p = -\bm J_p^\top$ and $\bm R_p \succeq 0$, this is a port-Hamiltonian descriptor system \cite{Beattie2017} with a nonlinear dissipative relation \cite{vanderSchaft.2017}.
\section{Controller Design}
\subsection{Control Objective}
The aim is to minimize a certain overall generation cost function $C(\bm{p}_g)$ which is assumed to be strictly convex. 

To allow only meaningful injection profiles $\bm{p}_g$, we add the following active power balance as an additional equality constraint, since it is a necessary condition for equlibrium of \eqref{plant-phs}:
\begin{align}
\Phi&=\sum_{i \in \mathcal V_g}{p_{g,i}} - \sum_{i \in \mathcal V}{p_{\ell,i}} \nonumber \\
&= \sum_{i \in \mathcal V} G_{ii}U_i^2 + 2 \cdot \sum_{(i,j) \in \mathcal E} G_{ij} U_iU_j\cos(\vartheta_{ij}), \label{eq-balance-scalar}
\end{align}
i.e. the surplus of energy (generation minus load) must be equal to the transmission line losses.

This leads to the following constrained optimization problem:
\begin{align*}
\begin{array}{ll}
\displaystyle \min&C(\bm p_{g}) \\
\mathrm{subject\ to} &\eqref{eq-balance-scalar}
\end{array}
\tag{OP1}\label{opt-P1}
\end{align*}
Note again that controllable loads $-p_{g,i}$ with (strictly concave) utility functions can be modeled as generations with (strictly convex) generation cost functions. 

Following the lines of \cite{Stegink.2017,Stegink.}, a distributed representation of \eqref{opt-P1} is derived by transforming the scalar balance condition given by \eqref{eq-balance-scalar} into a vector comprising of $ n = n_g+n_\ell $ scalar equations.
Equation \eqref{opt-P1} is fulfilled if and only if there exists some vector $\bm \nu \in \mathds R^{m_c}$, called the vector of \textit{virtual power flows} \cite{Stegink.}, such that
\begin{align}
\bm D_c \bm \nu = \widehat{\bm I}_g^\top\bm p_g  - \bm p_{\ell} - \bm\varphi \label{eq-balance-komm}
\end{align}
with $\bm D_c$ being an arbitrary incidence matrix of a communication graph $\mathcal G_c=(\mathcal V,\mathcal E_c)$ with $m_c= |\mathcal E_c|$ edges and $\bm\varphi = \mathrm{col}\{\bm\varphi_g , \bm\varphi_\ell\}$. As will become obvious later, the adjacency relationships of the communication graph determine which generator nodes exchange variables.

A distributed representation of \eqref{opt-P1} is then given as follows. 
\begin{align*}
\begin{array}{ll}
\displaystyle \min&C(\bm p_{g}) \\
\mathrm{subject \ to} &\eqref{eq-balance-komm}
\end{array}
\tag{OP2}\label{opt-P2}
\end{align*}
Since \eqref{eq-balance-scalar} as well as \eqref{eq-balance-komm} are affine in $\bm p_{g}$, the respective optimization problems \eqref{opt-P1} and \eqref{opt-P2} are strictly convex. Moreover it can be shown 
in \cite{Stegink.2017} that \eqref{opt-P2} is an exact relaxation of \eqref{opt-P1} and thus each optimal solution of \eqref{opt-P2} is also optimal w.r.t. \eqref{opt-P1}.
\subsection{Distributed Control Algorithm}
The constrained optimal pricing \eqref{opt-P2} can be achieved by means of a distributed control algorithm for $\bm p_g$.
This is stated in the following proposition:
\begin{proposition}
Let the assumptions from \ref{sec-assumptions} be given with power grid model \eqref{plant-phs} and optimization problem \eqref{opt-P2}. Then each equilibrium of the distributed control algorithm
\begin{align}
\bm \tau_g \dot{\bm p}_{g} &= - \nabla C({\bm p}_{g})+ \widehat{\bm I}_g{\bm \lambda} + \bm u_c, \label{primal-dual-1}\\
\bm \tau_{\lambda}\dot{\bm \lambda}&= \bm D_{c} \bm \nu - \widehat{\bm I}_g^\top\bm p_{g}+  \bm p_{\ell} + \bm \varphi, \\
\bm \tau_\nu \dot{\bm \nu} &= -\bm D_{c}^\top \bm \lambda \label{primal-dual-4}
\end{align}
with $\bm \tau_g, \bm \tau_{\lambda}, \bm \tau_\nu > 0$ is a minimizer of \eqref{opt-P2}.
\end{proposition}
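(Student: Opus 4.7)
The plan is to recognize \eqref{primal-dual-1}--\eqref{primal-dual-4} as a diagonally scaled primal-dual (saddle-point) gradient flow associated with the Lagrangian of \eqref{opt-P2}, and then to verify that its equilibria coincide with the KKT points of \eqref{opt-P2}. Since $C$ is strictly convex and the constraint in \eqref{opt-P2} is affine in the decision variables $(\bm p_g,\bm \nu)$, the problem is convex with an affine equality constraint; a constraint qualification holds automatically, and the KKT conditions become both necessary and sufficient for optimality.

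Concretely, I would first form the Lagrangian
\begin{equation*}
\mathcal{L}(\bm p_g,\bm \nu,\bm \lambda) = C(\bm p_g) + \bm \lambda^\top\bigl(\bm D_c\bm \nu - \widehat{\bm I}_g^\top\bm p_g + \bm p_\ell + \bm \varphi\bigr),
\end{equation*}
and extract its stationarity and primal-feasibility conditions, namely $\nabla C(\bm p_g)=\widehat{\bm I}_g\bm \lambda$, $\bm D_c^\top \bm \lambda=\bm 0$, and $\bm D_c\bm \nu = \widehat{\bm I}_g^\top\bm p_g - \bm p_\ell - \bm \varphi$. Next I would set the left-hand sides of \eqref{primal-dual-1}--\eqref{primal-dual-4} to zero; because $\bm \tau_g,\bm \tau_\lambda,\bm \tau_\nu>0$, the time-scale matrices drop out of the equilibrium equations and leave exactly the KKT system above (interpreting $\bm u_c$ as discussed below). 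Strict convexity of $C$ together with affinity of the constraint then imply that the resulting $\bm p_g$ is the unique minimizer of \eqref{opt-P2}, while $(\bm \lambda,\bm \nu)$ need not be unique when $\bm D_c$ has a nontrivial kernel.

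The main subtlety I foresee is the role of $\bm u_c$ in \eqref{primal-dual-1}: with $\bm u_c\neq\bm 0$ the first equilibrium equation would read $\nabla C(\bm p_g)=\widehat{\bm I}_g\bm \lambda+\bm u_c$, which is not the stationarity condition for \eqref{opt-P2}. I would therefore read the proposition as applying to the standalone controller with $\bm u_c=\bm 0$; in the later closed-loop interconnection with \eqref{plant-phs}, $\bm u_c$ will carry the port-Hamiltonian coupling signal, whose equilibrium value must be consistent with the power-balance relation \eqref{eq-balance-komm} and therefore does not spoil the KKT correspondence. Beyond this identification, the proof reduces to the standard equivalence of saddle-point equilibria and KKT points for convex programs with affine equality constraints, so I do not expect any nontrivial technical obstacle.
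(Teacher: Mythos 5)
Your proposal is correct and follows essentially the same route as the paper: identify the equilibria of the scaled primal--dual gradient flow with the KKT points of \eqref{opt-P2}, and use convexity (affine constraint, strictly convex cost) to conclude that KKT points are minimizers. Your explicit handling of $\bm u_c$ is a welcome refinement --- the paper's proof silently omits it, only noting later that $\bm u_c=-\bm\omega_g$ vanishes at equilibrium since $\overline{\bm\omega}_g=\bm 0$ --- and your observation that the affine equality constraint makes a constraint qualification automatic is a mild sharpening of the paper's appeal to Slater's condition.
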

\begin{proof}
Provided that Slater's condition is fulfilled, a necessary condition for an optimum $ \overline{\bm x}_c :=\mathrm{col}\{\overline{\bm p}_g, \overline{\bm\lambda}, \overline{\bm\nu}\}$ of \eqref{opt-P2} is given by the KKT conditions
\begin{align}
\nabla C(\overline{\bm p}_{g})- \widehat{\bm I}_g\overline{\bm \lambda} &= \bm 0, \\
\bm D_c^\top\overline{\bm\lambda} &= \bm 0, \\
-\bm D_c \overline{\bm \nu} + \widehat{\bm I}_g^\top\overline{\bm p}_{g}  - \overline{\bm p}_{\ell} - \overline{\bm\varphi}&= \bm 0.
\end{align} 
Since \eqref{opt-P2} is convex, the KKT conditions are also sufficient. This enables the primal-dual gradient method \cite{JokicLazarvandenBosch2009,Stegink.2015,Trip.2016} to be applied and results in \eqref{primal-dual-1}--\eqref{primal-dual-4}.
\end{proof}
Vector $\bm u_c$ is an additional control input and diagonal matrices $\bm \tau_g, \bm \tau_{\lambda}, \bm \tau_\nu > 0$ are controller gains, where small values for $\bm \tau_g, \bm \tau_{\lambda}, \bm \tau_\nu$ result in a faster convergence with larger transient amplitudes and vice versa.

Note that the distributed fashion of the controller \eqref{primal-dual-1}--\eqref{primal-dual-4} is provided by the fact that at each node $i \in \mathcal V_g$, local controller output $p_{g,i}$ depends only on variables that are calculated at node $i$ or at adjacent nodes.

A port-Hamiltonian representation of \eqref{primal-dual-1}--\eqref{primal-dual-4} is given by
\begin{align}
\dot{\bm x}_c = \underbrace{\begin{bmatrix} \bm 0 & \bm I & \bm 0  \\
	 -\bm I & \bm 0 & \bm D_{c} \\
	\bm 0 & -\bm D_{c}^\top& \bm 0 \end{bmatrix}}_{\bm J_c} \nabla H_c - \underbrace{\begin{bmatrix} \nabla C \\ -\bm \varphi \\  \bm 0 \end{bmatrix}}_{\bm r_c} 	+ \begin{bmatrix} \bm u_c \\  \bm p_{\ell} \\ \bm 0
		\end{bmatrix} \label{eq-controller-phs}
\end{align}
with the controller state 
$\bm x_c = \mathrm{col}\{ \bm \tau_g \bm p_g , \bm \tau_{\lambda}\bm \lambda,  \bm \tau_{\nu}\bm{\nu}\}$
and the controller Hamiltonian
\begin{align}
H_c(\bm x_c)=\frac 12 \bm x_c^\top \bm \tau_c^{-1}\bm x_c
\end{align}
with
\begin{align}
\bm \tau_c=\mathrm{diag}\{\bm \tau_g, \bm \tau_{\lambda},  \bm \tau_{\nu}\}.
\end{align}
\subsection{Closed-Loop System}
By choosing $\bm u_c = -\bm \omega_g$ as in \cite{Stegink.2017,Stegink.}, both plant \eqref{plant-phs} and controller \eqref{eq-controller-phs} can be interconnected in a power-preserving manner, leading to the closed-loop descriptor system:
\begin{align}
\bm E\dot{\bm x} = \left(\bm J - \bm R  \right) \nabla H - \bm r + \bm F \bm u \label{closed-loop}
\end{align}
with the closed-loop Hamiltonian $H(\bm x_p, \bm x_c) = H_p(\bm x_p) + H_c(\bm x_c)$ and
\begin{align}
\bm E &= \mathrm{diag}\{ \bm I_{3n_g+n+m_c+m}, \bm 0_{2n_\ell}\},\\
\bm J &= \begin{bmatrix}
\bm 0 & \bm I & \bm 0& \bm 0 & -\bm I & \bm 0 & \bm 0 & \bm 0  \\
-\bm I & \bm 0 & \bm D_{c} & \bm 0 & \bm 0 & \bm 0 & \bm 0 & \bm 0  \\
\bm 0 & -\bm D_{c}^{\top}   & \bm 0 & \bm 0 & \bm 0 & \bm 0 & \bm 0& \bm 0 \\
\bm 0 & \bm 0 & \bm 0 & \bm 0 & \bm D_{pg}^\top & \bm 0 & \bm D_{p\ell}^\top & \bm 0 \\
\bm I & \bm 0 & \bm 0 &  -\bm D_{pg} & \bm 0 & \bm 0 & \bm 0 & \bm 0  \\
\bm 0 & \bm 0 & \bm 0 &  \bm 0 & \bm 0 & \bm 0 & \bm 0 & \bm 0  \\
\bm 0 & \bm 0 & \bm 0 &  -\bm D_{p\ell} & \bm 0 & \bm 0 & \bm 0 & \bm 0 \\
\bm 0 & \bm 0 & \bm 0 & \bm 0 & \bm 0 & \bm 0 & \bm 0 & \bm 0  
\end{bmatrix}, \\
\bm R &= \mathrm{diag}\{\bm 0_{n_g+n+m_c}, \bm R_p \}, \\
\bm r &= \mathrm{col}\{\bm r_c, \bm r_p\}, \\
\bm F &= \begin{bmatrix} \bm 0 & \bm 0 & \bm 0 \\
\bm 0 & \bm 0 & {\bm I} \\
\bm 0 & \bm 0 & \bm 0 \\
\bm 0 & \bm 0 & \bm 0 \\
\bm 0 & \bm 0 & -\bm{\widehat I}_g \\
\bm{\hat \tau}_U & \bm 0 & \bm 0 \\
\bm 0 & \bm 0 & -\bm{\widehat I}_\ell \\
\bm 0 & -\bm I & \bm 0
\end{bmatrix}, \\
\bm u &= \mathrm{col}\{\bm U_f, \bm q_\ell , \bm p_\ell \}.
\end{align}
It is notable that for each equlibrium $\overline{\bm x}$, from \eqref{primal-dual-1} we have $\nabla C(\overline p_{gi})=\overline\lambda_i$ since $\overline{\omega}_i=0$. Moreover, from \eqref{primal-dual-4} it follows that $\bm{\overline\lambda} \in \ker{\bm D_c^\top}$, and as $\bm D_c$ is an incidence matrix, all elements of $\bm{\overline\lambda}$ must be equal. Summing up, it follows that all marginal prices are equal at steady state, which is the \emph{economic dispatch criterion} \cite{Doerfler2019}.

For the sake of brevity, we denote the co-state vector $\bm z=\nabla H$ and the dissipation vector $\mathcal R(\bm z,\bm x)=\bm R \bm z + \bm r$. Then, \eqref{closed-loop} is a port-Hamiltonian descriptor system with nonlinear dissipation \cite{vanderSchaft.2017}
\begin{align}
\bm E\dot{\bm x} = \bm J \bm z - \mathcal R(\bm z,\bm x) + \bm F \bm u. \label{closed-phs-1}
\end{align}
For a constant input $\overline{\bm{u}}$ the corresponding equilibrium $\overline{\bm x}$ is the solution of 
\begin{align}
{\bm 0} = \bm J \overline{\bm z} - \mathcal R(\overline{\bm z},\overline{\bm x}) + \bm F \overline{\bm u} \label{closed-phs-2}
\end{align}
where $\overline{\bm z}=\nabla H(\overline{\bm x})$. 
Since $H$ is a convex and nonnegative function, the shifted Hamiltonian \cite{vanderSchaft.2017}
\begin{align}
\overline H(\bm x)=H(\bm x)-\left( \bm x - \overline{\bm x} \right)^\top \nabla H(\overline{\bm x}) - H(\overline{\bm x}) \label{shifted-H}
\end{align}
is positive definite with minimum $\overline H(\overline{\bm x})=0$. Thus the shifted closed-loop dynamics, i.e. \eqref{closed-phs-1} minus \eqref{closed-phs-2}, can be expressed in terms of \eqref{shifted-H} as follows:
\begin{align}
\bm E\dot{\bm x} = \bm J \nabla \overline H - \left[\mathcal R(\bm z,\bm x)- \mathcal R(\overline{\bm z},\overline{\bm x})\right] + \bm F \left[\bm u - \overline{\bm u}\right].
\end{align}
As a result, stability of \eqref{closed-loop}  is given if the shifted passivity property \cite{vanderSchaft.2017}
\begin{align}
\left[\bm z - \overline{\bm z}\right]^\top \left[\mathcal R(\bm z,\bm x)- \mathcal R(\overline{\bm z},\overline{\bm x})\right] \geq 0 \label{eq-passivity}
\end{align}
is satisfied.
\section{Simulation}
We now validate the presented control approach by simulating a medium voltage power network as depicted in Fig. \ref{figure1}. The parameter values are partly based on those provided in \cite{Trip.2016} and are summarized in Tables \ref{tab2} and \ref{tab3}. However, deviating from \cite{Trip.2016},
\begin{enumerate}
	\item Line conductances $G_{ij}$ are nonzero. Without loss of generality, yet for sake of simplicity, we assume constant $R \slash X$ ratios $\eta$, i.e. $G_{ij}= - \eta\cdot B_{ij}$ for each line $(i,j)$,
	\item Generator reactances and line parameters are scaled down appropriately to suit a medium-voltage distribution grid with a base voltage of \SI{10}{kV}.
\end{enumerate} 
The simulations were carried out in Wolfram Mathematica 11.3. 
\begin{figure}[t]
 \begin{center}
  \includegraphics[width=0.40\textwidth]{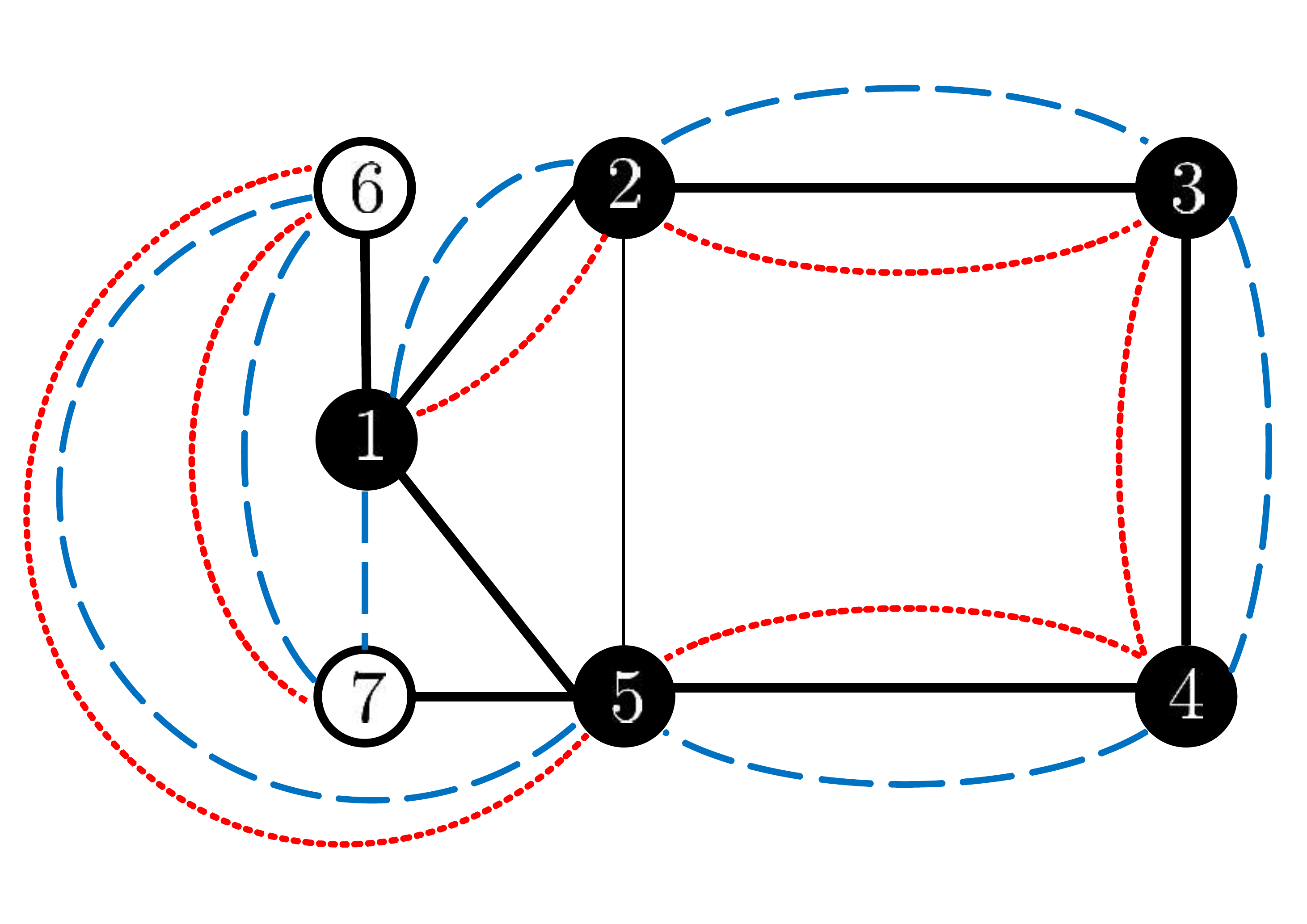}
  \caption{Simulation model with five generator nodes (1--5) and two load nodes (6 and 7). 
  	Solid black lines: physical interconnection via transmission lines. Dashed blue lines: Ring communication. Dotted red lines: Open ring communication.
  }
\label{figure1}
 \end{center}
\end{figure}
\begin{table}
	\renewcommand{\arraystretch}{1.3}
	\caption{Numerical values of the nodal parameters used in the simulations. The units of the parameters are given in p.u., except $\tau_{U,i}$ which is given in seconds.}
	\begin{center}
		\begin{tabular}{|l||l|l|l|l|l|l|l|}
			\hline
			$i$ 	& 1 & 2 & 3 & 4 & 5 & 6 & 7 \\
			\hline
			\hline
			$A_i$ & 1.6 & 1.2 & 1.4 & 1.4 & 1.5 & 1.3 & 1.3 \\
			$B_{ii}$ & -5.5 & -5.5 & -3.3 & -3.1 & -7.0 & -2.0 & -2.0 \\
			$M_i$ & 5.2 & 4.0 & 4.5 & 4.2 & 4.4 & \textendash & \textendash \\ 
			$X_{d,i}$ & 0.02 & 0.03 & 0.03 & 0.025 & 0.02 & \textendash & \textendash\\
			$X'_{d,i}$ & 0.004 & 0.006 & 0.005 & 0.005 & 0.003 & \textendash & \textendash\\
			$\tau_{U,i}$ & 6.45 & 7.7 & 8.3 & 7.0 & 7.36 & \textendash & \textendash \\
			\hline
		\end{tabular}
		\label{tab2}
	\end{center}
\end{table}

\begin{table}
	\renewcommand{\arraystretch}{1.3}
	\caption{Numerical values of the line parameters used in the simulations. The units of the parameters are given in p.u.}
	\begin{center}
		\begin{tabular}{|l|l|l|l|l|l|l|l|}
			\hline
			 $B_{12}$ & $B_{15}$ & $B_{16}$ & $B_{23}$ & $B_{25}$ & $B_{34}$ & $B_{45}$ & $B_{57}$ \\
			\hline
			\hline
			$1.27$ & 1.4 & 2.0 & 1.4 & 2.05 & 1.1 & 1.0 & 2.0 \\
			\hline
		\end{tabular}
		\label{tab3}
	\end{center}
\end{table}
\subsection{Parameterization of Input Signals and Cost Function}
In the following numerical examples, the power system is initially in a steady state with nominal frequency $\omega^n$ and constant power loads $p_{\ell,i}$. At time $ t= \SI{30}{\second}$ and $ t= \SI{60}{\second}$, 
a step load change of $+0.1$ p.u. occurs at load nodes $p_{\ell,6}$ and $p_{\ell,7}$, respectively. 

Moreover, the overall cost function is chosen to be
\begin{align}
C(\bm p_g)=\frac 12 \sum_{i =1}^5 \frac{1}{w_i}\cdot p_{g,i}^2
\end{align}
with weighting factors $w_1=\num{1}$, $w_2=\num{1.1}$, $w_3=1.2$, $w_4=1.3$, $w_5=1.4$. The choice of a weighted sum-of-squares is convenient \cite{Monshizadeh.} since the above-mentioned economic dispatch criterion leads to $\nabla C(\overline p_{g,i})=\overline p_{g,i} \slash w_i = \text{const.}$ for all $i=1,\ldots, 5$, i.e. active power sharing \cite{Schiffer.1}.
\subsection{Numerial Results}
\subsubsection{Effect of communication matrix}
First, we investigate the effect of different communication graphs on the performance of the distributed controllers. For the simulation, we employ four communication structures, namely
\begin{enumerate}[label = {[\alph*]}]
\item{a \textit{complete communication graph} (all-to-all communication)} 
\item{a communication graph \textit{identical to the physical topology of power system} (solid black lines in Fig.\ref{figure1})}
\item{a \textit{ring} (dashed blue lines in Fig.\ref{figure1})}
\item{an \textit{open ring} (dotted red lines in Fig. \ref{figure1})}
\end{enumerate}
 The $R \slash X$ ratio is set to one.
 
 As seen in Fig. \ref{figure2}, after a certain time of about 10 seconds, all nodal frequencies return to the nominal frequency, i.e. frequency regulation is achieved in all four cases.
 Moreover, as seen in Fig. \ref{figure3}, injections $p_{g,i}$ are equidistant for each post-fault equilibrium, i.e. active power sharing is also maintained in all four cases.
 The choice of $\bm D_c$ only affects the transient behavior where a sparse communication matrix results in a slighly bigger overshoot of $p_{g_i}$ after a step load change.
 \begin{figure}[t]
 	\centering
 	\subfloat{
 		\includegraphics[width=8.1 cm, height=3.24 cm]{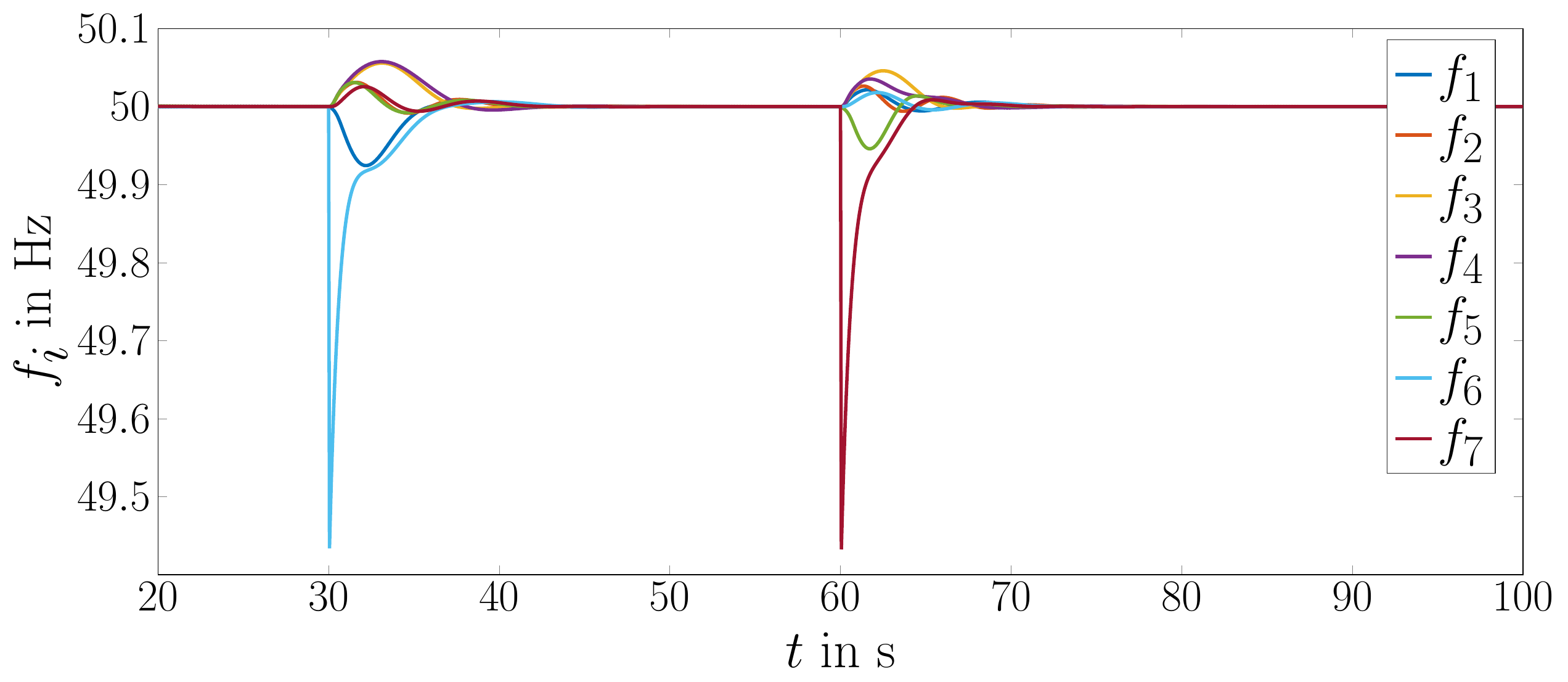}}
 	
 	\subfloat{
 		\includegraphics[width=8.1 cm,height = 3.24 cm]{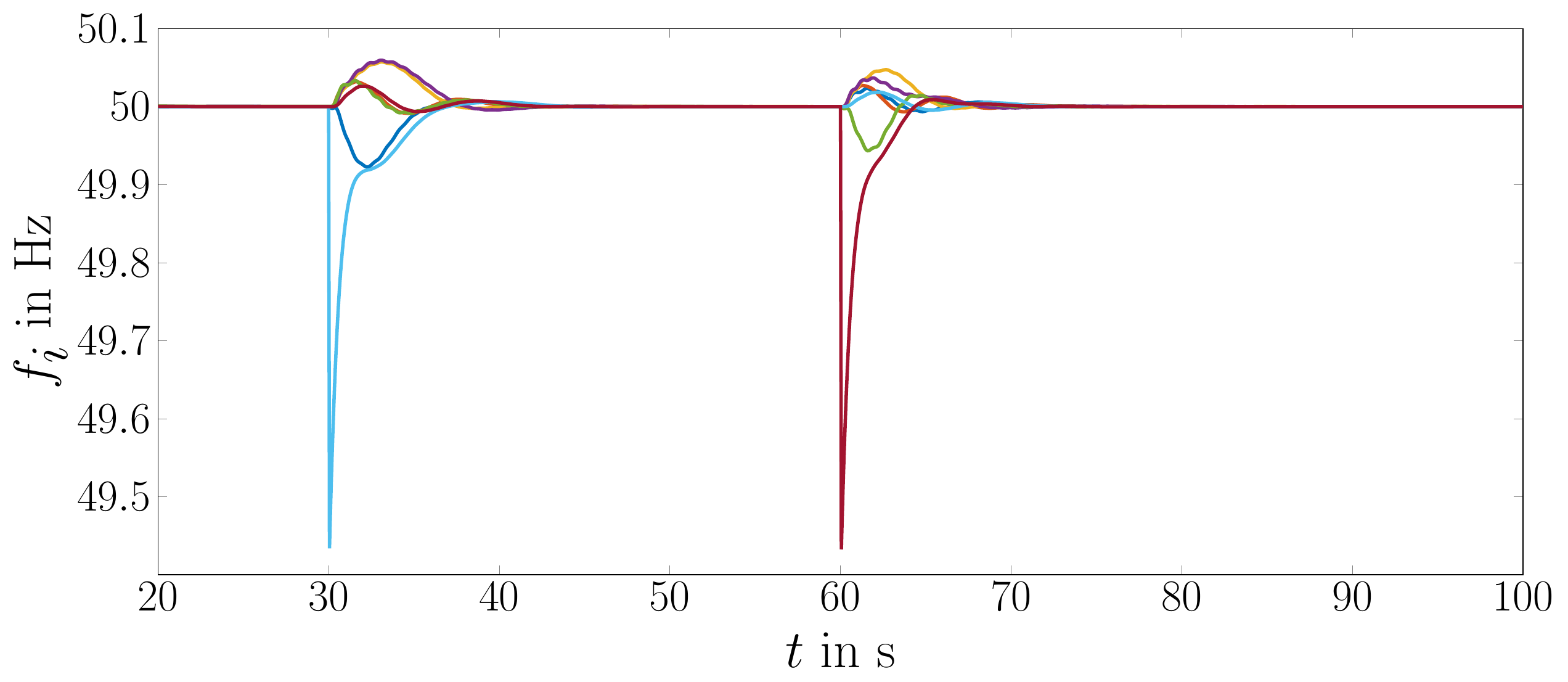}}
 	
 	\subfloat{
 		\includegraphics[width=8.1 cm,height = 3.24 cm]{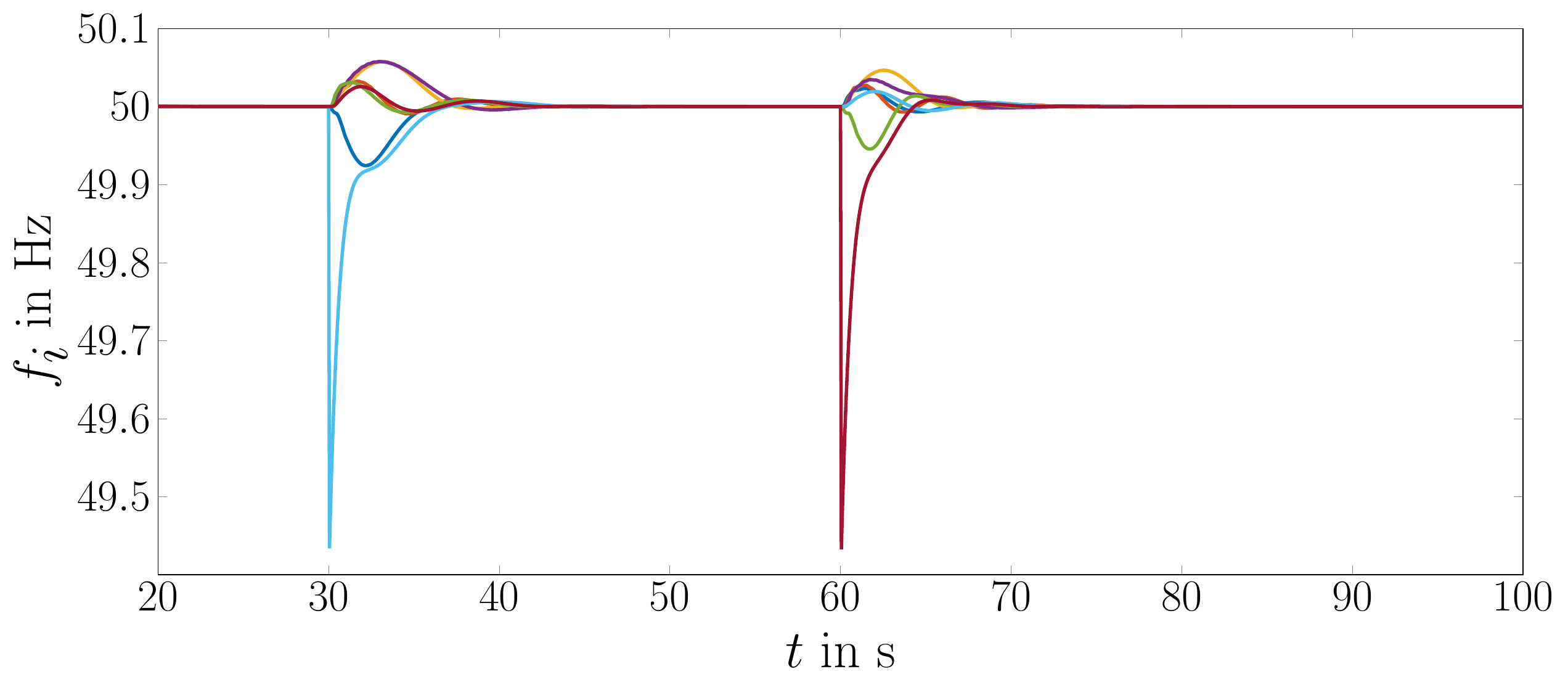}}
 	
 	\subfloat{
 		\includegraphics[width=8.1 cm,height = 3.24 cm]{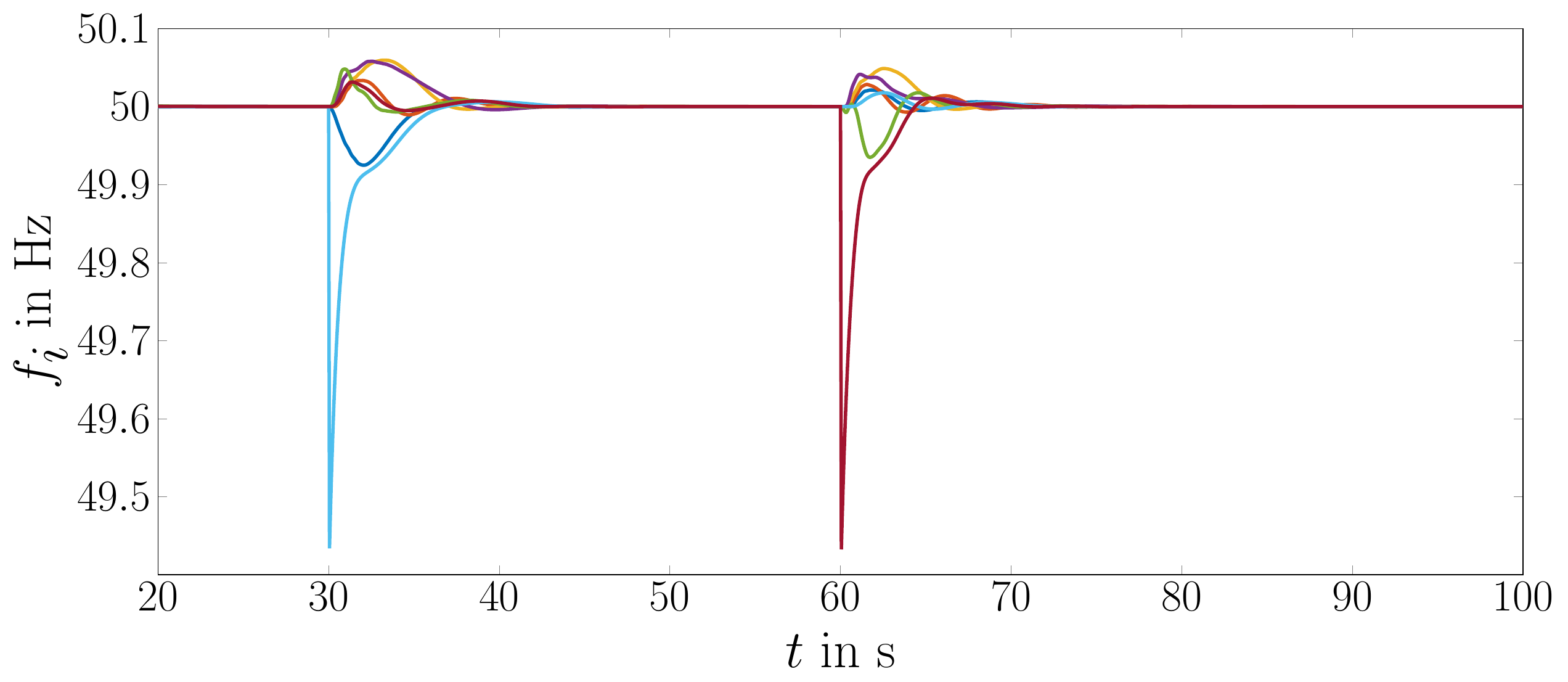}}
 	\caption{Frequency regulation after an increase in local demands $p_{\ell,6}$ and $p_{\ell,7}$ for 1)\,[a]--[d].}
 	\label{figure2}
 \end{figure}
 
\begin{figure}[t]
\centering
   \subfloat{
   \includegraphics[width=8.1 cm, height=3.24 cm]{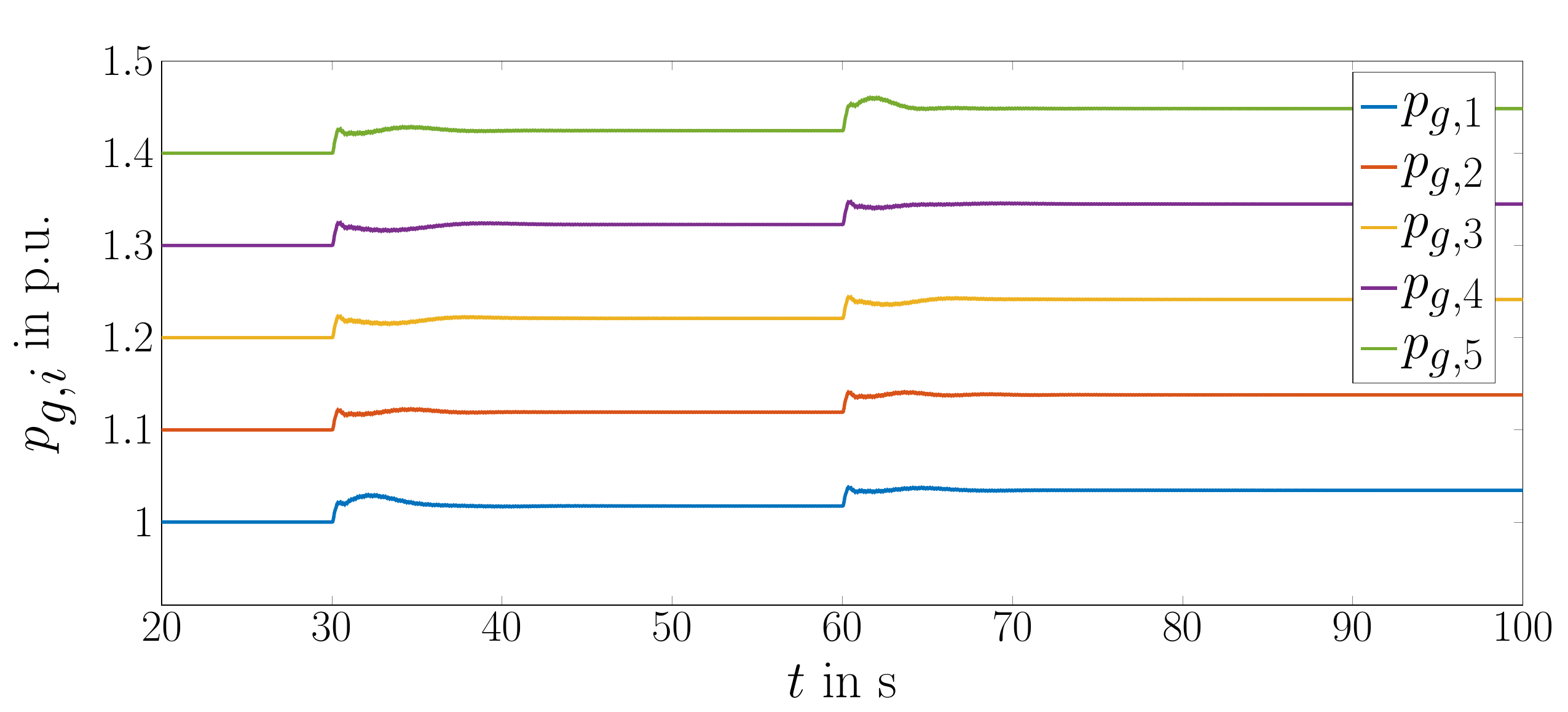}}

   \subfloat{
   \includegraphics[width=8.1 cm,height = 3.24 cm]{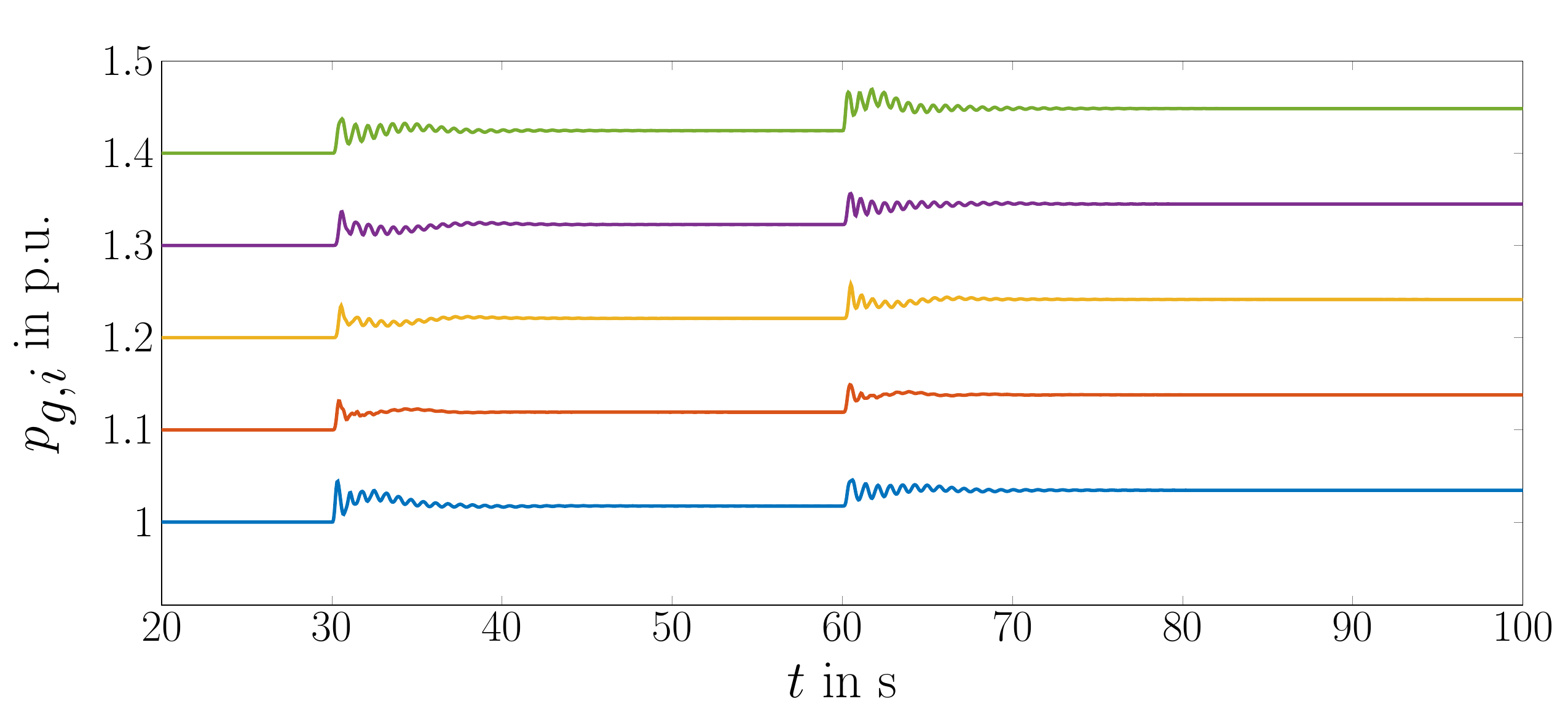}}

   \subfloat{
   \includegraphics[width=8.1 cm,height = 3.24 cm]{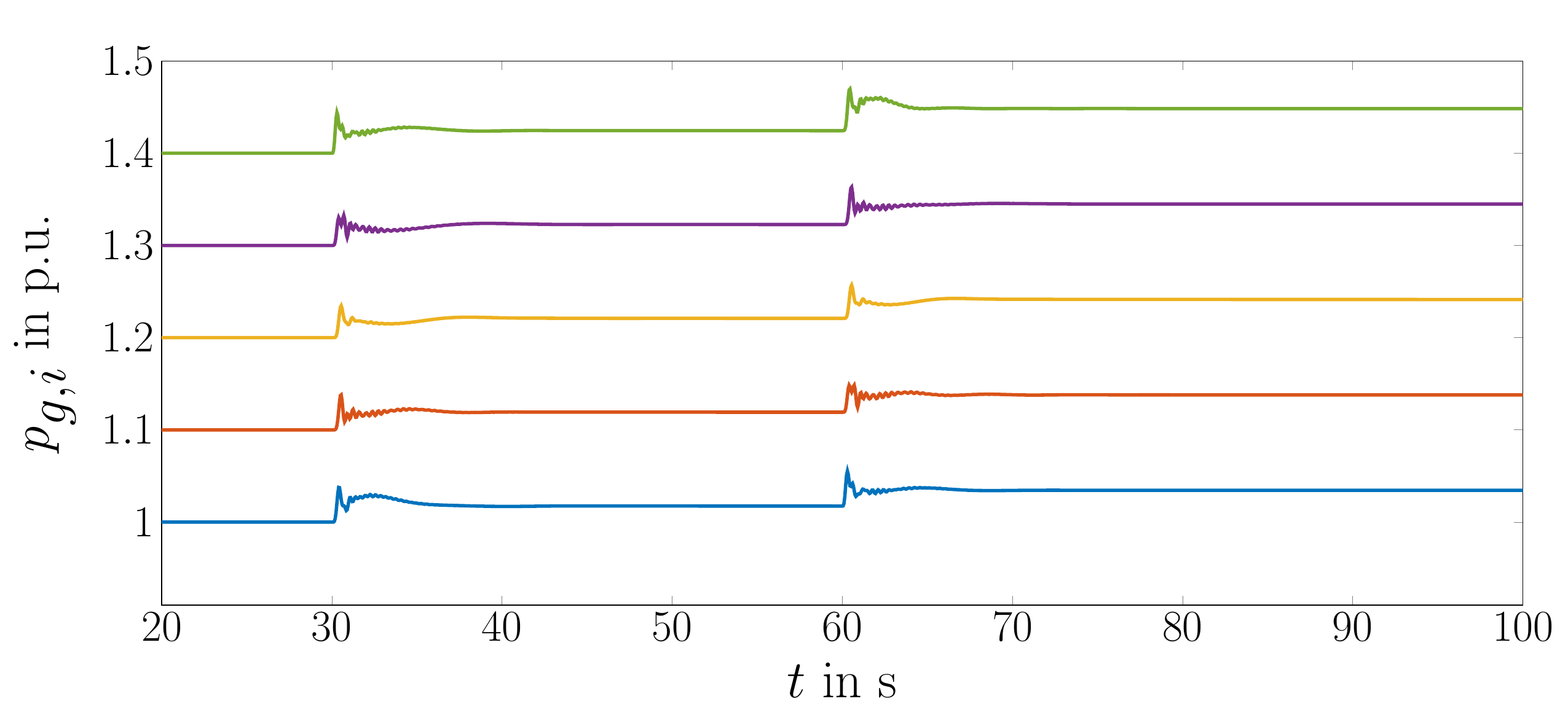}}

   \subfloat{
   \includegraphics[width=8.1 cm,height = 3.24 cm]{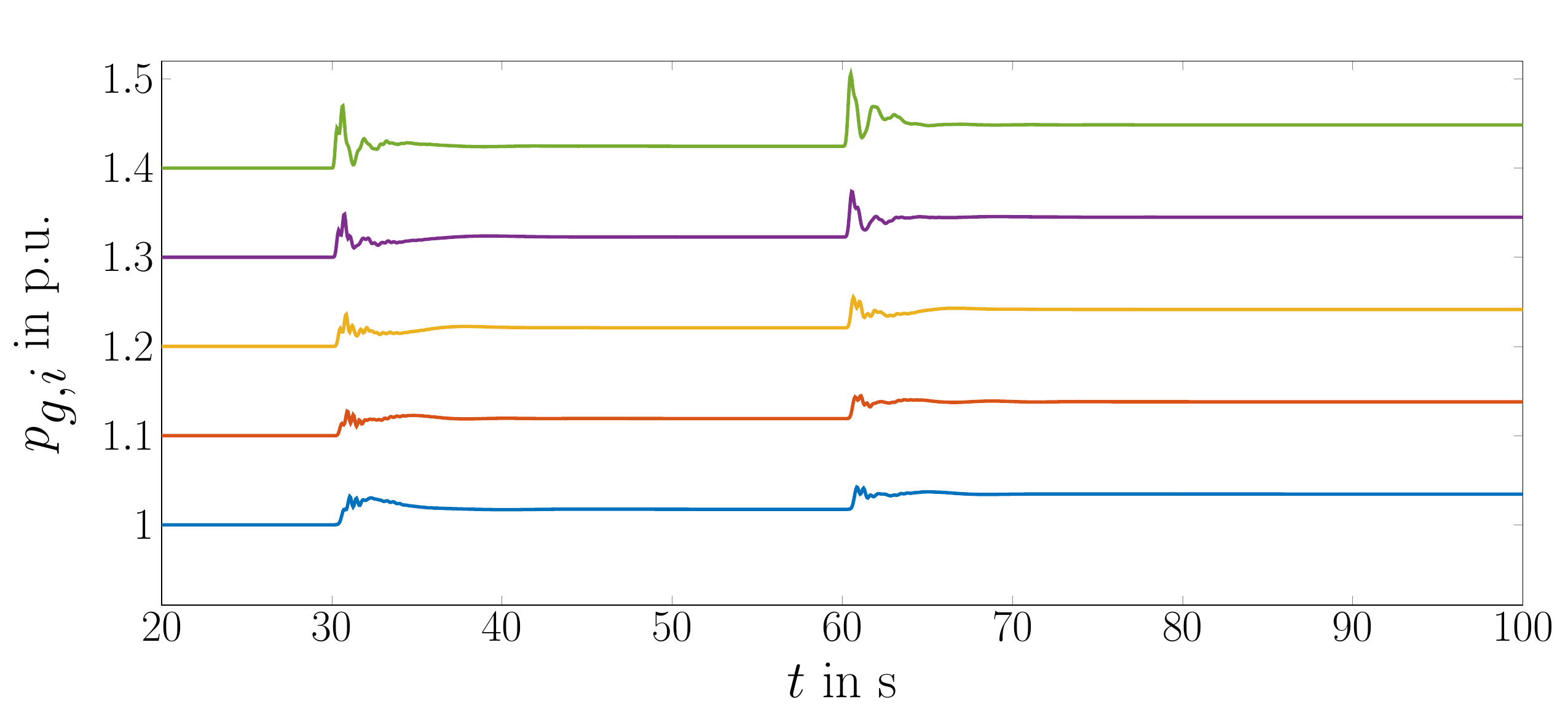}}
\caption{Optimal power injections after an increase in local demands $p_{\ell,6}$ and $p_{\ell,7}$ for 1)\,[a]--[d].}
\label{figure3}
\end{figure}
For reasons of limited space, the communication structure for the remaining simulations is always chosen to be identical to the physical topology (communication structure [b]).
\subsubsection{Comparison to \cite{Stegink.}}
 Fig. \ref{figure4} shows the price-based frequency control proposed in \cite{Stegink.} for $R\slash X$ being set to one. As can be seen in Fig. \ref{figure4}, this results in a steady-state deviation from the nominal frequency. Evidently, the closed-loop system begins to diverge from the nominal frequency well ahead of the load increments due to the unaccounted resistive losses generated in the lines.
 
Note that such a steady state deviation from nominal frequency always occurs in all simulations of \cite{Stegink.} with nonzero resistive losses, and that this deviation from nominal frequency increases when resistive losses increase.
 \begin{figure}[t]
\centering
   \includegraphics[width=8.1 cm,height = 3.24 cm]{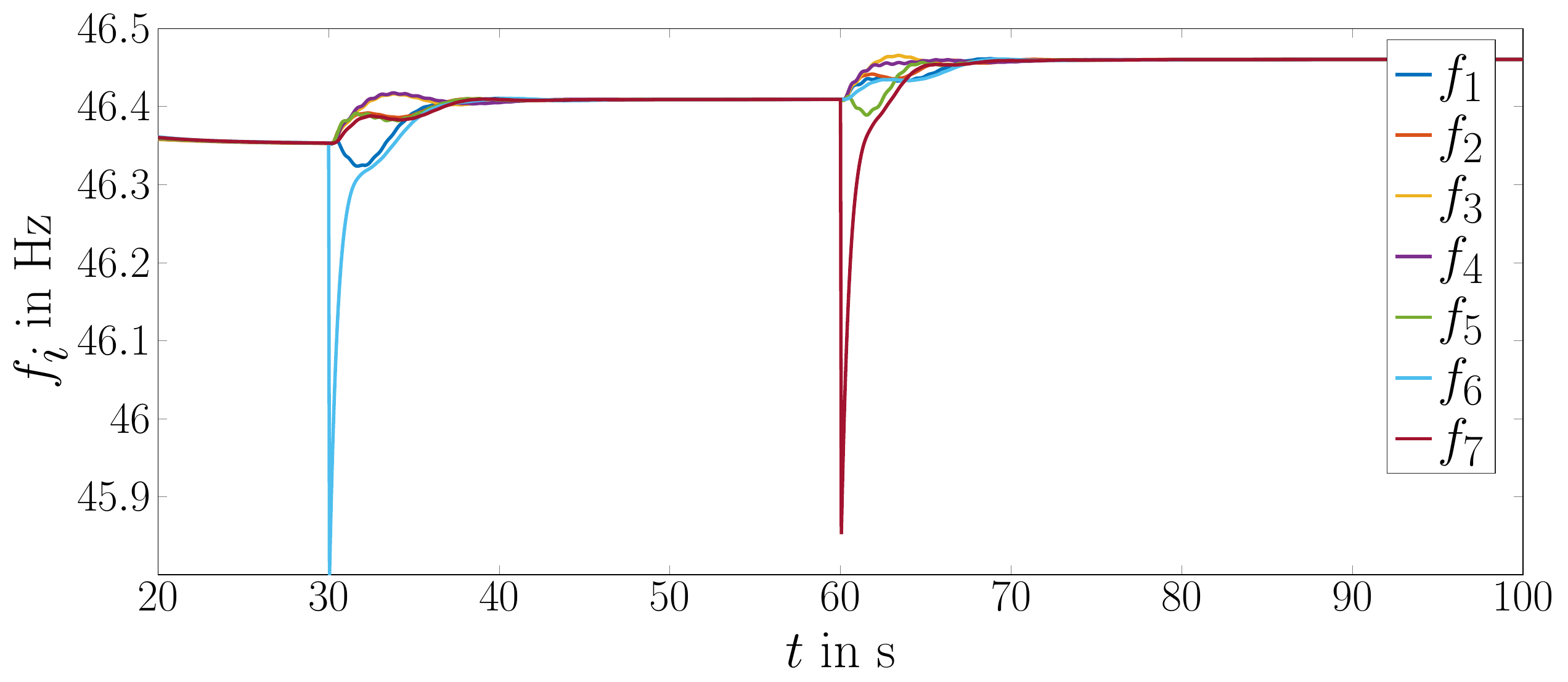}
\caption{Steady-state deviation from nominal frequency due to neglect of line conductances in the controller design \cite{Stegink.}.}
\label{figure4}
\end{figure}
\subsubsection{Effect of $R/X$ ratio on closed-loop stability}
In order to yield insight into stability of the closed loop, we carry out a numerical analysis of the shifted passivity property \eqref{eq-passivity} for various ${R\slash X}$ ratios. 

As apparent in Fig. \ref{figure5}, the (shifted) dissipations \eqref{eq-passivity} in the system \eqref{closed-loop} tend to increase with an increase in line losses. At the same time, however, they also have a more negative rate of change for more resistive lines, thereby being more vulnerable to a descent into instability. This can be observed for the case ${R\slash X = 3}$.  Therefore, we conclude that there exists a certain ${R\slash X}$ ratio, exceeding which the stability is no more given for the closed control loop.
\begin{figure}[t]
	   \includegraphics[width=8.1 cm, height=3.24 cm]{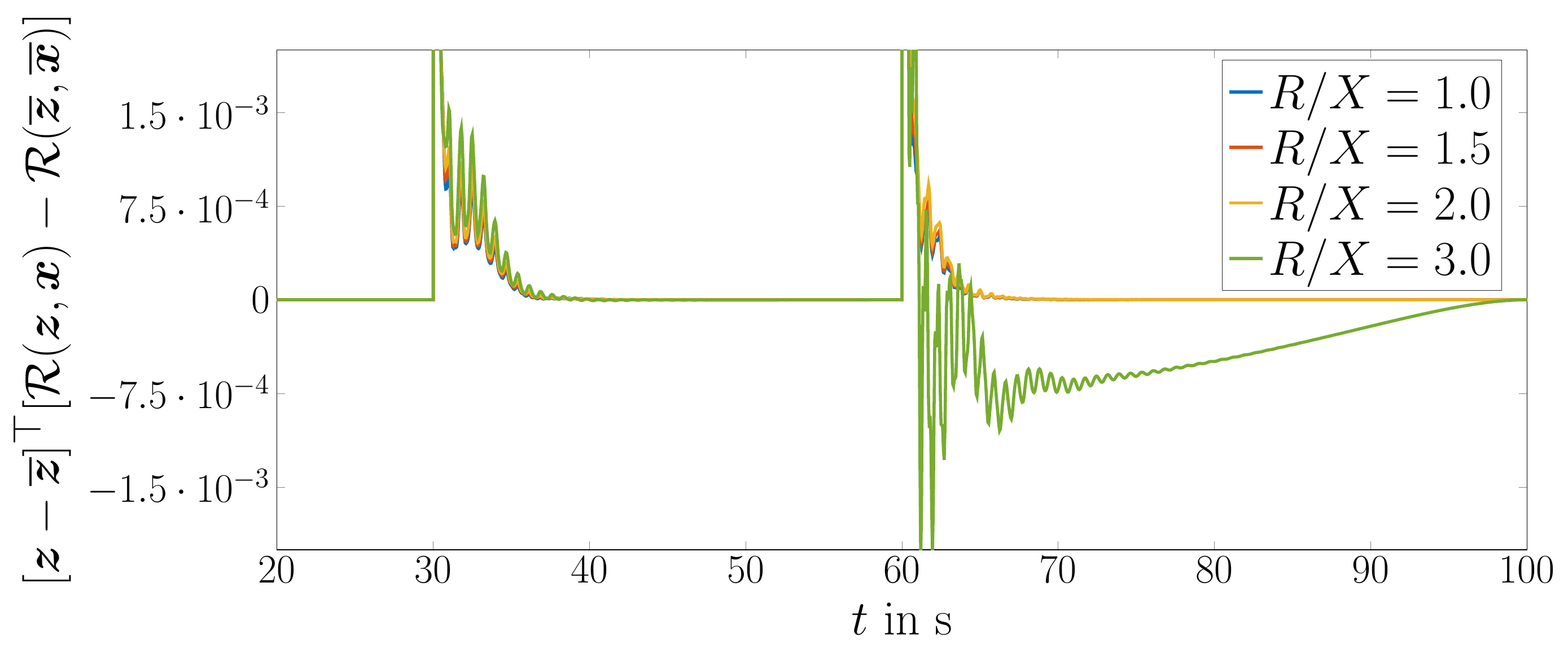}
\caption{Dissipations in the system for ascending $R\slash X$ ratios. The system becomes unstable for $R\slash X = 3$.}
\label{figure5}
\end{figure}  
\subsubsection{Clock drifts and communication failures}
A clock drift at a particular node would manifest itself in an incorrect measurement of the frequency deviation at that node, whereas a communication failure between two nodes would cut off the information flow, so that the neighbor-to-neighbor costs and line losses can no more be calculated. 

To simulate a clock drift of the controller at node 1, we assume that the measurement of $\omega_1$ at node 1 constantly deviates from the actual frequency by $\SI{-1}{\hertz}$.
The communication failure is modeled by eliminating the virtual power flow $\nu_{12}$ between the first and the second node. 

As depicted in Fig. \ref{figure6}, simulations show that the controller is robust in terms of clock drifts and is able to restore the nominal frequency after the step load changes. Notwithstanding, active power sharing can no longer be achieved. Moreover, the communication failure leads to a steady-state synchronous frequency deviant from the nominal frequency, see Fig. \ref{figure7}.
 \begin{figure}[t]
\centering
   \subfloat{
   \includegraphics[width=8.1 cm,height = 3.24 cm]{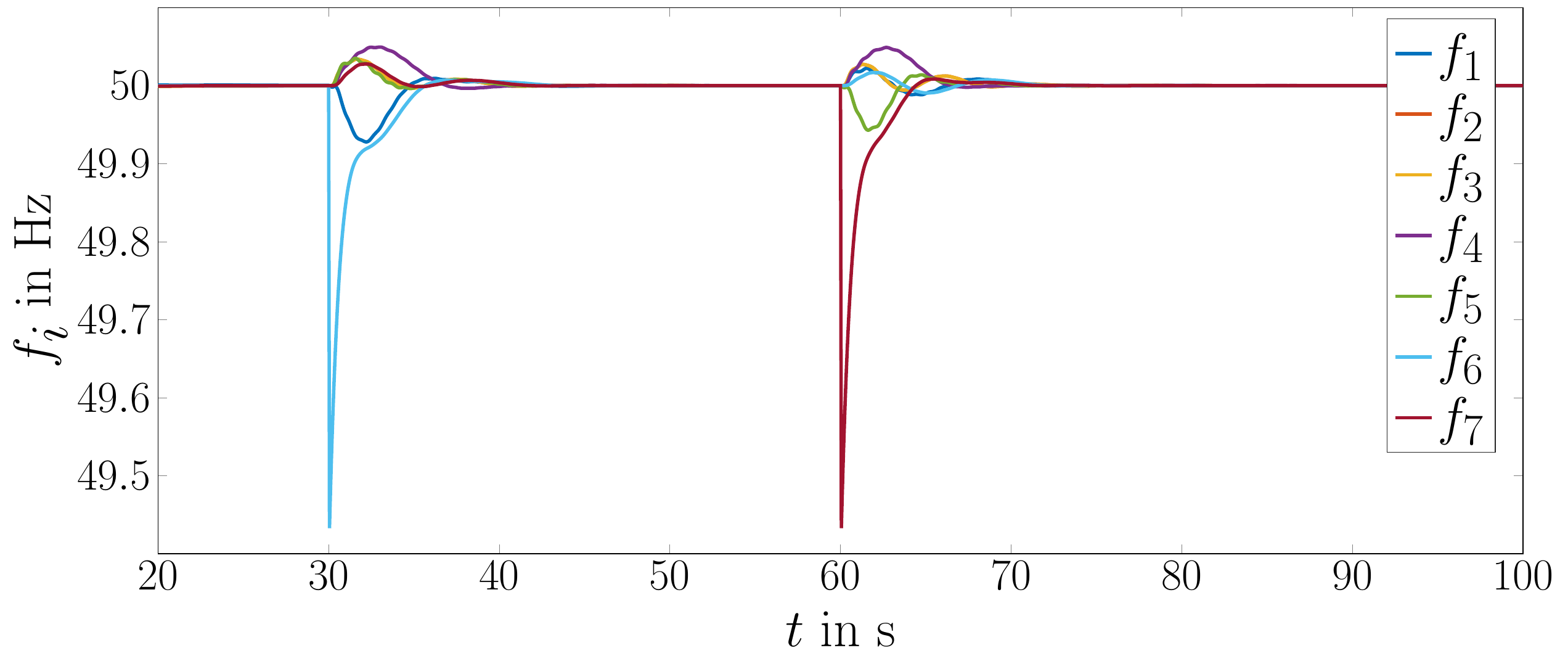}
}

   \subfloat{
   \includegraphics[width=8.1 cm,height = 3.24 cm]{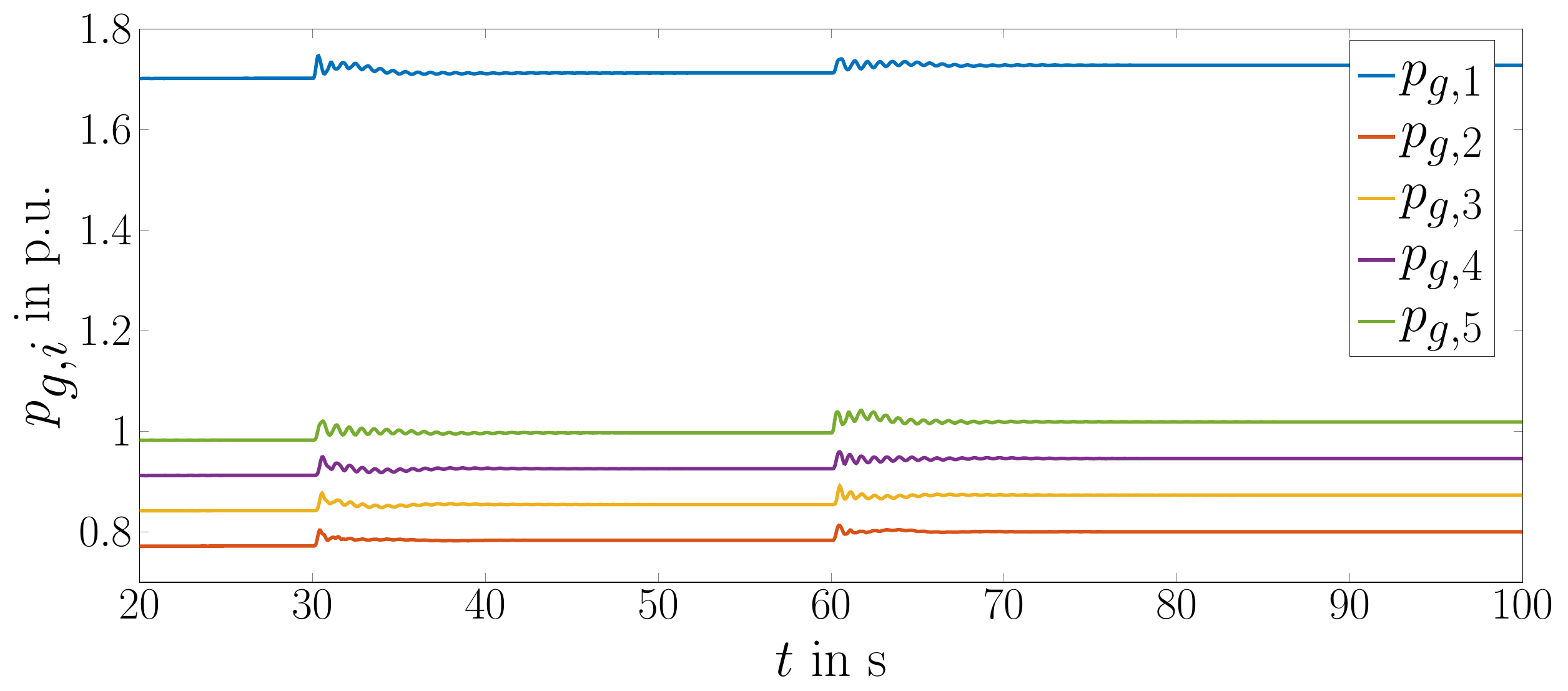}}
\caption{Performances of the distributed frequency control under a clock drift.}
\label{figure6}
\end{figure}

 \begin{figure}[t]
	\centering
		\includegraphics[width=8.1 cm, height=3.24 cm]{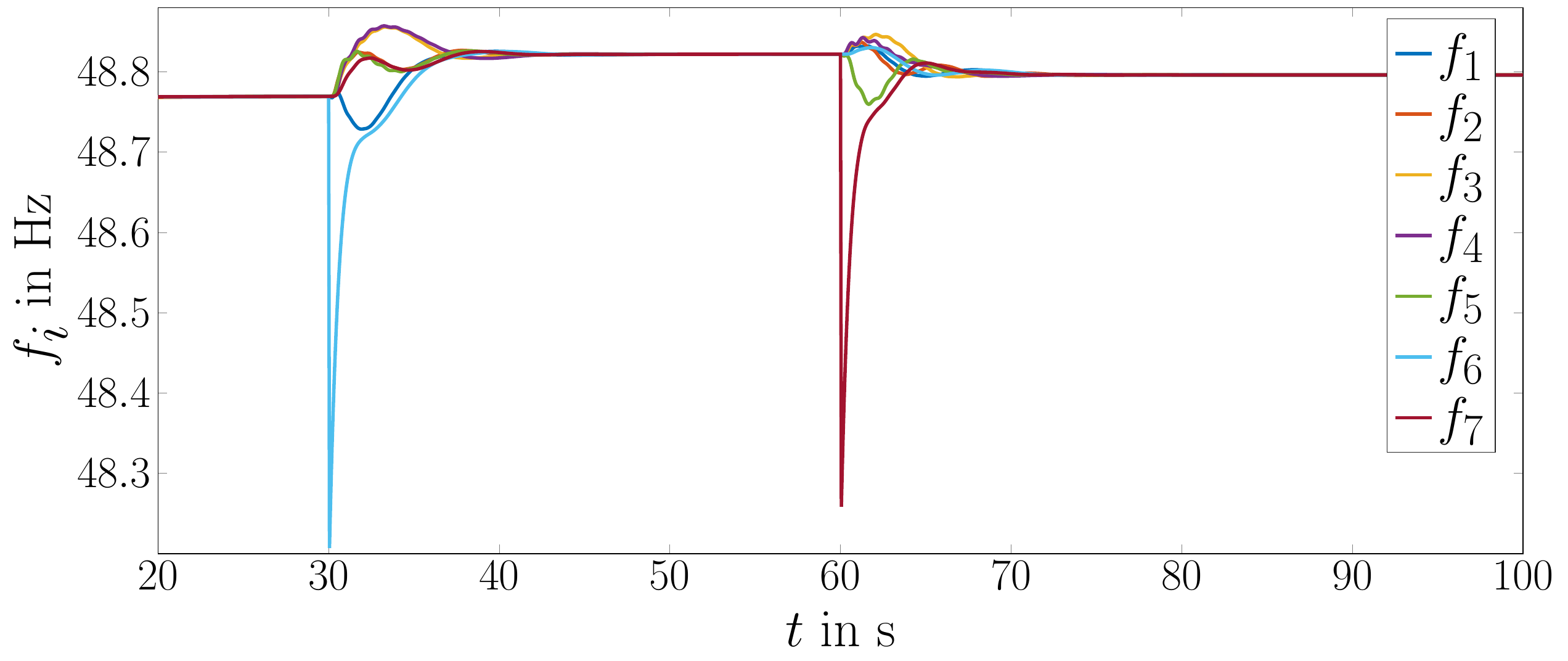}
	\caption{Steady-state deviation from nominal frequency due to communication failure.}	
	\label{figure7}
\end{figure}
\section{Summary and Outlook}
In this paper, we propose a price-based frequency control for lossy power grids that enables distributed communication and provides zero deviation from nominal frequency. 
The control method can be deployed for meshed as well as radial networks in distribution level power grids. It also takes into account load nodes with uncontrollable active power demands and results in a differential-algebraic nonlinear power grid model that can be represented as a port-Hamiltonian descriptor system. The passivity analysis based on simulations indicates a stable system up to a certain $R \slash X$ ratio in the lines.
Further research includes the additional consideration of power electronics-resourced interfaces and more rigorous stability results exploiting the (shifted) passivity property of the closed-loop system. Furthermore, nodal constraints such as generation limits and operational constraints for the transmission lines shall be included in the underlying optimization problem in order to always guarantee an operation that is in compliance with all technical regulations.
\newpage
\bibliography{Lukas_Stegink_Quellen} 
\bibliographystyle{unsrt}
\end{document}